\documentclass[runningheads,envcountsame,orivec]{llncs}

\usepackage[T1]{fontenc}
\usepackage[utf8]{inputenc}
\usepackage{newtxtext}
\usepackage[varvw]{newtxmath}
\usepackage{microtype}
\usepackage{algorithm}
\usepackage{algpseudocode}
\usepackage{color}
\usepackage[shortcuts]{extdash}
\usepackage{hyperref}
\usepackage[capitalise,noabbrev,nameinlink]{cleveref}

\hypersetup{
  colorlinks=true,
  linkcolor=black,
  citecolor=black,
  urlcolor=blue,
  pdfauthor={M. Utkan Gezer},
  pdftitle={Constant-Coin Complete-Information Debates for P with Arbitrarily Small Strong Error}
}

\spnewtheorem{fact}[theorem]{Fact}{\bfseries}{\itshape}

\newcommand{\Zp}{\mathbb{Z}_{+}}
\newcommand{\calP}{\mathcal{P}}
\newcommand{\AFA}{\mathsf{2AFA}}
\newcommand{\afa}{\mathsf{2afa}}
\newcommand{\dfa}{\mathsf{2dfa}}
\newcommand{\Pclass}{\mathsf{P}}
\newcommand{\pcstronglow}{\mathsf{pc\text{-}strong\text{-}low\text{-}CDEB}}
\newcommand{\concoin}{\mathsf{con\text{-}coin}}
\newcommand{\conspace}{\mathsf{con\text{-}space}}
\newcommand{\unboundedtime}{\infty\text{-}\mathsf{time}}
\newcommand{\qinit}{q_{\mathrm{init}}}
\newcommand{\qacc}{q_{\mathrm{acc}}}
\newcommand{\qrej}{q_{\mathrm{rej}}}
\newcommand{\winp}{w_{\mathrm{inp}}}
\newcommand{\wdeb}{w_{\mathrm{deb}}}
\newcommand{\pass}{\textit{pass}}
\newcommand{\Pass}{\textit{Pass}}
\newcommand{\reject}{\textit{reject}}
\newcommand{\Reject}{\textit{Reject}}
\newcommand{\accept}{\textit{accept}}
\newcommand{\Accept}{\textit{Accept}}

\floatname{algorithm}{Procedure}
\crefname{algorithm}{Procedure}{Procedures}
\Crefname{algorithm}{Procedure}{Procedures}
\crefname{fact}{Fact}{Facts}
\Crefname{fact}{Fact}{Facts}
\makeatletter
\renewcommand{\theHALG@line}{\thealgorithm.\arabic{ALG@line}}
\makeatother
\title{Constant-Coin Complete-Information Debates for \texorpdfstring{$\Pclass$}{P}
with Arbitrarily Small Strong Error}
\titlerunning{Constant-Coin Complete-Information Debates for P}
\author{M. Utkan Gezer}
\authorrunning{M. Utkan Gezer}
\institute{Department of Computer Engineering, Bo\u{g}azi\c{c}i University,\\
\.{I}stanbul, T\"urkiye\\
\email{utkan.gezer@bogazici.edu.tr}}

\begin{document}
\maketitle

\begin{abstract}
We study complete\-/information debate systems in which a probabilistic finite\-/state verifier reads the alternating messages of a prover and a refuter. Demirci, Say, and Yakary{\i}lmaz showed that every language in $\Pclass$ has such debates checkable with a constant number of random bits and arbitrarily small weak error. Their strong\-/error construction, which also counts nontermination as failure, did not permit arbitrary error reduction. We close this gap: for every $L\in\Pclass$ and every $\varepsilon>0$, there is a constant\-/space verifier using a constant number of private coin tosses that has perfect completeness and strong error at most $\varepsilon$. The verifier simulates a polynomial\-/time alternating multihead finite automaton, privately spot\-/checking one of its input heads. The key observation is that, on a nonmember, the refuter may concede any round in which the prover first misreports a head reading. This ensures termination against every prover when the refuter follows the specified strategy, and permits strong\-/error reduction by repetition.
\keywords{Debate systems \and Probabilistic finite automata \and Alternating multihead finite automata \and Constant randomness \and Strong error}
\end{abstract}

\section{Introduction}

An interactive proof system allows a computationally limited probabilistic verifier to check language membership by exchanging messages with a computationally unbounded prover that tries to convince it to accept~\cite{GMR89}. A debate system gives the verifier a second participant, the refuter, whose objective is to convince it to reject. The verifier judges the competing claims from the messages supplied by the prover and refuter, as in probabilistically checkable debate systems~\cite{CFLS95}. Among debate models with differing restrictions on the participants' access to each other's messages~\cite{R79}, we consider the complete\-/information case, in which both participants see the entire preceding debate before supplying their next message. We use the formal model of Demirci, Say, and Yakary{\i}lmaz~\cite{DSY15}, with a finite\-/state verifier whose random choices remain private.

Even with only finite\-/state memory, a verifier can gain computational power through interaction, as shown by Dwork and Stockmeyer~\cite{DS92}. Say and Yakary{\i}lmaz imposed the additional restriction of using only constantly many random bits and introduced a technique for checking multihead automaton computations with constant space and randomness, obtaining both arbitrarily small weak error and bounded strong error for every language in $\mathsf{NL}$, the class of languages recognizable in nondeterministic logarithmic space~\cite{SY14}. Demirci, Say, and Yakary{\i}lmaz adapted this technique to complete\-/information debates by simulating alternating multihead automata, with the prover and refuter supplying existential and universal choices, respectively. They obtained the same two types of error guarantees for every language in $\Pclass$, again with constant space and randomness~\cite{DSY15}.

The distinction between weak and strong error concerns the treatment of nontermination. Weak error does not penalize a verifier that is induced to run forever on a nonmember, whereas strong error counts this as failure as well as erroneous acceptance. Neither of the above strong\-/error constructions allowed arbitrary error reduction. Gezer and Say later refined the single\-/prover construction to obtain arbitrarily small strong error for a subclass of $\mathsf{NL}$~\cite{GS22}. Like the earlier weak\-/error constructions, this refinement has perfect completeness. This leaves the analogous question for complete\-/information debates: to what extent can strong error be reduced arbitrarily while retaining constant space and randomness?

We obtain this improvement for every language in $\Pclass$. Our verifier has perfect completeness: it accepts every member with certainty against every refuter when paired with an appropriate prover. It also has arbitrarily small strong error: for any prescribed $\varepsilon>0$, the verifier can be constructed so that, on a nonmember, every prover faces a refuter that makes it reject with probability at least $1-\varepsilon$. In each of these cases, the specified player's strategy guarantees that the verifier halts in polynomial time regardless of the opponent's behavior. This result and construction first appeared in the author's doctoral thesis~\cite[Section~6]{G26}.

The protocol builds on the earlier spot\-/checking construction by allowing the refuter to concede a round rather than continue a simulation compromised by a false report. Against the refuter constructed in the proof, these concessions prevent undetected false reports from trapping the verifier in an infinite round, making repetition effective for reducing strong error. \Cref{sec:main} gives the full protocol and analysis.

The remaining sections are as follows: \Cref{sec:preliminaries} introduces alternating multihead automata, complete\-/information debates, and constant\-/randomness sampling. \Cref{sec:pruning} establishes polynomial\-/time pruning, and \Cref{sec:conclusion} discusses possible extensions.

\section{Preliminaries}\label{sec:preliminaries}

Let $\Zp$ denote the set of positive integers. For integers $a\le b$, $[a..b]$ denotes the set of all integers from $a$ to $b$, inclusive. For a set $S$, $\calP(S)$ denotes its power set. An alphabet $\Sigma$ is a finite nonempty set of symbols, and $\Sigma^*$ is the set of finite strings over $\Sigma$, including the empty string. We write $\winp\in\Sigma^*$ for the input string and $n=|\winp|$ for its length. Let $\triangleright$ and $\triangleleft$ be distinct symbols outside $\Sigma$, serving as the left and right endmarkers, respectively. The endmarker\-/extended input alphabet is $\Sigma_{\bowtie}=\Sigma\cup\{\triangleright,\triangleleft\}$. The set $D=\{-1,0,1\}$ specifies a head movement one cell left, no movement, or one cell right, respectively.

\subsection{Alternating Multihead Finite Automata}

An alternating two\-/way finite automaton with $k$ input heads, abbreviated $\afa(k)$, is a tuple
\[
 M=(Q_{\exists},Q_{\forall},\Sigma,\delta,\qinit,\qacc,\qrej),
\]
where $Q_{\exists}$ and $Q_{\forall}$ are disjoint finite sets of existential and universal states, respectively, and $\qacc$ and $\qrej$ are distinct accepting and rejecting halting states outside those sets. The full state set is $Q=Q_{\exists}\cup Q_{\forall}\cup\{\qacc,\qrej\}$, and the transition function is
\[
 \delta:(Q_{\exists}\cup Q_{\forall})\times\Sigma_{\bowtie}^{k}
 \longrightarrow \calP\bigl(Q\times D^{k}\bigr).
\]
The machine begins in its designated initial state $\qinit\in Q_{\exists}\cup Q_{\forall}$. Its read\-/only input tape contains $\triangleright\winp\triangleleft$, and all $k$ heads begin on $\triangleright$. When the state is $q$ and the symbols scanned by heads $H_1,\ldots,H_k$ are $\sigma_1,\ldots,\sigma_k$, the set $\delta(q,\sigma_1,\ldots,\sigma_k)$ lists the available actions. Each action $(q',d_1,\ldots,d_k)$ is a directive to change the state to $q'$ and move each head $H_i$ by $d_i$. One of these actions is selected and executed in a step. Upon entering $\qacc$ or $\qrej$, the machine halts and accepts or rejects, respectively.

Without loss of language recognition power, we require that no action move a head beyond an endmarker and that $\delta(q,\sigma_1,\ldots,\sigma_k)\neq\varnothing$ for every nonhalting state $q$ and every tuple of scanned symbols.

A configuration on a fixed input is a tuple $(q,p_1,\ldots,p_k)$ consisting of the current state and the head positions, indexed from $0$ at $\triangleright$ through $n+1$ at $\triangleleft$. There are exactly $|Q|\cdot(n+2)^k$ configurations, some of which may be unreachable from the initial configuration. The execution tree of $M$ on $\winp$ has the initial configuration $(\qinit,0,\ldots,0)$ at its root. At each nonhalting node, it has one child for each available action, labeled by the resulting configuration. Halting nodes have no children.

An \emph{existential strategy} is a subtree of the execution tree rooted at the initial configuration, retaining exactly one child at each existential node and all children at each universal node. Dually, a \emph{universal strategy} retains all children at each existential node and exactly one child at each universal node. An existential (respectively, universal) strategy is called \emph{winning} if it is finite and all its leaves are accepting (respectively, rejecting) configurations. The \emph{height} of a finite strategy is the maximum number of transitions on a root\-/to\-/leaf path.

The machine is said to \emph{accept} an input if it has a winning existential strategy, to \emph{reject} if every existential strategy contains a rejecting configuration, and to \emph{loop} otherwise. It \emph{recognizes} $L$ if it accepts exactly the strings in $L$; it may reject or loop on nonmembers.

A deterministic $k$\-/head automaton, denoted $\dfa(k)$, is the special case with exactly one available action at each nonhalting configuration. Its execution tree is a single path, so existential and universal states make no difference to its outcome.

Let $\AFA(k)$ denote the class of languages recognized by $\afa(k)$'s. The notation $\AFA(k,O(n^t)\text{-time})$ restricts this class to machines whose every computation path, on every input of length $n$, halts within $O(n^t)$ steps.

We use two results of King~\cite{K88}.

\begin{fact}[{\cite[Theorem~3.1]{K88}}]\label{fact:king-p}
$\bigcup_{k\in\Zp}\AFA(k)=\Pclass$.
\end{fact}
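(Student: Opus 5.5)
This fact is King's theorem, and the natural route is to prove the two inclusions separately, following the classical link between alternation and space, namely $\mathsf{ASPACE}(\log n)=\Pclass$ (Chandra, Kozen, and Stockmeyer).

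\emph{Inclusion $\bigcup_k\AFA(k)\subseteq\Pclass$.} Fix an $\afa(k)$ $M$ and an input $\winp$ of length $n$. The configuration graph has at most $|Q|\cdot(n+2)^k$ nodes, a polynomial, and it can be built in polynomial time by evaluating $\delta$ at each node. We decide acceptance by a least-fixed-point computation. Start with $A_0=\{\text{configurations with state }\qacc\}$. Obtain $A_{i+1}$ by adding every existential configuration with some successor in $A_i$, and every universal configuration all of whose successors lie in $A_i$; the sets $\delta(\cdot)$ are nonempty, so this last condition is not vacuous. The sequence stabilizes after at most polynomially many rounds. We then accept if and only if the initial configuration lies in the fixed point.

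The correctness claim needed is that a configuration lies in $A_i$ exactly when it has a winning existential strategy of height at most $i$, which follows by induction on $i$. Hence membership in the fixed point is equivalent to having a finite winning existential strategy, which is precisely the paper's acceptance condition. Note that looping branches are never accepted, matching the requirement that winning strategies be finite.

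\emph{Inclusion $\Pclass\subseteq\bigcup_k\AFA(k)$.} Let $L\in\Pclass$, decided by a deterministic one-tape Turing machine $T$ in time $n^c$. We build an $\afa(k)$ that verifies the computation tableau top-down, in the style of the proof that $\Pclass\subseteq\mathsf{ASPACE}(\log n)$.

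The machine existentially guesses the final tableau cell that contains the accepting state. It then repeatedly does the following. Having claimed that cell $(t,j)$ holds symbol $s$, it existentially guesses the three cells $(t-1,j-1),(t-1,j),(t-1,j+1)$. It checks locally that these are consistent with $s$ under $T$'s transition function. It then universally branches to verify each of the three guessed claims. At $t=0$ it checks the claim directly against $\winp$ or against a blank.

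Each index $t,j\le n^c$ is stored as a tuple of $c$ head positions, read as base-$(n+2)$ digits. Incrementing and decrementing such a tuple, and reading the input symbol at position $j$ when $j\le n$, are standard multihead tricks; they use a constant number of auxiliary heads for carries and comparisons. Every branch decreases $t$, so all universal paths are finite. Consequently, acceptance coincides with the existence of a finite winning existential strategy, and $k$ depends only on $c$.

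\emph{Main obstacle.} The main difficulty is this last arithmetic step: implementing $n^c$-bounded counters with finitely many heads. In particular, decrementing $t$ and moving $j\pm1$ across digit boundaries must be done without losing the stored values, and all of this must happen in the finite control plus heads. I would handle it by copying values between heads with a dedicated scratch head for each digit.
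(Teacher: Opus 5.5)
Your argument is correct. It takes a different route from the paper, which gives no proof of its own and simply imports the equality from King. The standard derivation factors through $\mathsf{ASPACE}(\log n)$: it converts $k$ head positions into $O(\log n)$ work\-/tape bits and back under alternation, then invokes Chandra--Kozen--Stockmeyer.

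You instead carry out both halves of Chandra--Kozen--Stockmeyer directly on the multihead model. The first half is a least\-/fixed\-/point computation over the $|Q|\cdot(n+2)^k$ configurations. Its stages capture winning existential strategies of bounded height, so they match the paper's finite\-/strategy acceptance semantics exactly; your remark on the nonemptiness of $\delta$ is the right detail there. The second half is a tableau game whose indices live in base\-/$(n+2)$ head tuples. Your route buys self\-/containment and avoids checking that the head/work\-/tape simulations preserve alternation and the treatment of infinite branches; the citation buys brevity.

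Your ``main obstacle'' is lighter than you fear. Each universal branch is a separate computation path, so you can decrement $t$ and shift $j$ by $d\in\{-1,0,1\}$ in place. A carry or borrow just sends a digit head to $\triangleright$ or $\triangleleft$. The input symbol at position $j\le n$ is simply the one scanned by the low\-/order head of $j$ when its other heads are on $\triangleright$. So no copying or scratch heads are needed.

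Do state the soundness induction on $t$ explicitly: the existential player wins from a claim exactly when it is true. This holds because $T$ is deterministic, so its window rule is a function, and every triple consistent with a false claim contains a false cell for a universal branch to pursue. Also use a bound such as $(n+2)^c-1$ in place of $n^c$, and make the accepting configuration persist, so that small inputs and early acceptance are covered.
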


\begin{fact}[{\cite[Lemma~3.9]{K88}}]\label{fact:king-height}
Let $M$ be a $\afa(k)$ with state set $Q$. On every member input of length $n$, $M$ has a winning existential strategy of height at most $|Q|\cdot(n+2)^k$.
\end{fact}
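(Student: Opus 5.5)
The plan is a least-fixed-point argument over the finite set of configurations. Since a winning strategy's correctness depends only on the configurations labelling its nodes, winnability is a property of configurations rather than of positions in the execution tree. Fix an input $\winp$ of length $n$ and let $\mathcal{C}$ be the set of configurations, so $|\mathcal{C}|=N:=|Q|\cdot(n+2)^k$. For $t\ge 0$, let $A_t\subseteq\mathcal{C}$ be the set of configurations $c$ such that the execution tree rooted at $c$ admits a winning existential strategy of height at most $t$.

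\textbf{Step 1: recurrence.} First I will show that $A_0$ is the set of configurations whose state is $\qacc$. Then I will show that $A_{t+1}$ consists of $A_0$ together with:
\begin{itemize}
\item every nonhalting existential configuration $c$ having \emph{some} action leading to a configuration in $A_t$;
\item every nonhalting universal configuration $c$ all of whose action-successors lie in $A_t$.
\end{itemize}
The inclusion ``$\subseteq$'' comes from looking at the root's retained children in a winning strategy of height $\le t+1$: each child's subtree is a winning strategy of height $\le t$ for that child's configuration. The inclusion ``$\supseteq$'' comes from grafting the witnessing strategies of height $\le t$ under a new root. Here it matters that $\delta$ is nonempty at nonhalting states, so a universal node never becomes a vacuously winning leaf. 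I also need that $\qrej$ leaves are excluded, which is automatic. From the recurrence, monotonicity $A_t\subseteq A_{t+1}$ follows by induction on $t$.

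\textbf{Step 2: stabilization.} Because $A_{t+1}$ is determined by $A_t$ alone via a fixed operator, $A_{t+1}=A_t$ implies $A_s=A_t$ for all $s\ge t$. Hence the chain $A_0\subseteq A_1\subseteq\cdots$ strictly increases until it stabilizes. Every strict step adds at least one of the $N$ configurations, so there are at most $N$ strict steps, and the chain is constant from index $N$ on. Consequently $A_N=\bigcup_t A_t$.

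\textbf{Step 3: conclusion.} If $\winp$ is a member, $M$ has a winning existential strategy. By definition this strategy is finite, so it has some height $h$, and the initial configuration $(\qinit,0,\ldots,0)$ lies in $A_h\subseteq\bigcup_t A_t=A_N$. Unfolding the definition of $A_N$ gives a winning existential strategy of height at most $N=|Q|\cdot(n+2)^k$, as claimed.

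The only delicate point is Step 1. There I must be careful that a strategy is a subtree of the execution tree, where repeated configurations appear as distinct nodes, while the sets $A_t$ are indexed by configurations. The grafting and restriction arguments handle this cleanly because the subtree below any node depends only on that node's configuration. The counting in Step 2 is routine.
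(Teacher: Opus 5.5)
Your argument is correct. The paper gives no proof of this statement; it simply imports \cite[Lemma~3.9]{K88}. Your layered fixed-point construction is therefore a self-contained route rather than a variant of anything in the text. Step~1 handles exactly the points that need care. The subtree below a node depends only on its configuration, and nonemptiness of $\delta$ at nonhalting states keeps a universal configuration from being vacuously winning. Step~2's counting then forces stabilization by index $N$, and in fact earlier, since the $\qacc$ configurations already lie in $A_0$ and the $\qrej$ configurations never enter.

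A common alternative is surgery on a given winning strategy. Whenever a root-to-leaf path passes through two nodes with the same configuration, replace the subtree at the upper node by the subtree at the lower one. This preserves being a winning existential strategy and strictly shrinks a finite tree. Iteration therefore ends with no repeated configuration on any path, giving height at most $N-1$. That argument is shorter and works directly on the strategy you are given.

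Yours proves a uniform statement instead: every configuration from which the existential side can force acceptance does so within $N$ steps, using a positional rule (from $A_{t+1}\setminus A_t$, move into $A_t$). It is also the same attractor computation that underlies the determinacy fact the paper cites for rejecting inputs, with the roles of $\qacc$ and $\qrej$ and of the two sides swapped.
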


The following is a standard consequence of determinacy for finite reachability games, applied to the configuration graph with rejecting configurations as the target set~\cite[Section~4.2, pp.~103--104]{G11}.

\begin{fact}\label{fact:universal-strategy}
If an alternating multihead finite automaton $M$ rejects an input $\winp$, then it has a winning universal strategy on $\winp$.
\end{fact}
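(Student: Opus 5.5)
The plan is to view the configuration graph of $M$ on $\winp$ as a finite two\-/player reachability game and to build the universal strategy by backward induction. There are at most $N=|Q|\cdot(n+2)^k$ configurations. The target set $R$ consists of the rejecting configurations, that is, those whose state is $\qrej$. Define the attractor levels $A_0=R$ and, for $j\ge 0$, let $A_{j+1}$ be $A_j$ together with every nonhalting configuration $c$ that satisfies one of two conditions: $c$ is existential and every successor of $c$ lies in $A_j$; or $c$ is universal and some successor lies in $A_j$. Here we use that $\delta$ is nonempty at nonhalting configurations, so an existential node with all successors in $A_j$ has at least one successor. Accepting configurations are never added. The sequence is monotone in a finite set, so it stabilizes at some $A=A_N$.

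There are two cases. In the first, the initial configuration $c_0=(\qinit,0,\ldots,0)$ lies in $A$. We then construct the universal strategy top\-/down through the execution tree, maintaining the invariant that every retained node labeled $c$ has $c\in A$. For each $c\in A\setminus R$ let $\mathrm{rank}(c)$ be the least $j$ with $c\in A_j$. At a universal node we keep one child whose configuration has strictly smaller rank. At an existential node we keep all children, and all of them have smaller rank. Rank therefore strictly decreases along every path, so every path has length at most $N$. The resulting subtree is finite, since it is finitely branching with bounded depth. Every leaf is a halting configuration in $A$, hence in $R$. So the strategy is winning.

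The main step is the second case, $c_0\notin A$, which must be ruled out using the hypothesis that $M$ rejects. Here we build an existential strategy that never visits $A$. Take $c\notin A$ nonhalting. If $c$ is existential, then since $c\notin A_{N+1}=A$, some successor lies outside $A$, and we keep that successor. If $c$ is universal, then no successor lies in $A$, so all successors are outside $A$. Starting from $c_0$ and following this rule gives an existential strategy, possibly infinite, none of whose nodes lies in $A$. In particular it contains no configuration in $R$, that is, no rejecting configuration. This contradicts the definition of rejection, which says every existential strategy contains a rejecting configuration. Hence $c_0\in A$, and the first case applies.

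The only delicate points are that strategies are defined on the execution tree rather than on the graph, and that the existential strategy in the second case may be infinite. Both are handled by unfolding the positional choices along tree nodes, where each node's choice depends only on its configuration label, and by noting that the definition of rejection quantifies over all existential strategies, including infinite ones.
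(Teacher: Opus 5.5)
Your proof is correct and follows the paper's approach: the paper cites determinacy of finite reachability games on the configuration graph with the rejecting configurations as target set. Your attractor construction, with a rank-decreasing universal strategy when $c_0\in A$ and an $A$-avoiding existential strategy contradicting rejection otherwise, is the standard proof of that determinacy, carried over to the execution tree.
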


\subsection{Complete-Information Debate Systems}

A debate system has three agents: a probabilistic verifier $V$, a prover $P$, and a refuter $R$. The prover and refuter, collectively called the \emph{players}, argue for membership and nonmembership, respectively, of the common input $\winp$ in a language $L$. Our verifiers have finite\-/state memory, a single two\-/way head on the read\-/only input tape containing $\triangleright\winp\triangleleft$, and a single one\-/way head on a read\-/only debate tape. A two\-/way head may move one cell left or right or stay in place; a one\-/way head may move one cell right or stay in place. The input and debate heads initially scan $\triangleright$ and the first debate symbol, respectively. The verifier uses the outcomes of independent fair coin tosses to make its random choices.

Let $\Gamma_1$ and $\Gamma_0$ be disjoint finite alphabets for prover and refuter messages, respectively, and let $\Gamma=\Gamma_1\cup\Gamma_0$. Both players are computationally unbounded. In a \emph{complete\-/information debate system} (CDS), their strategies are functions $P:\Sigma^*\times\Gamma^*\to\Gamma_1$ and $R:\Sigma^*\times\Gamma^*\to\Gamma_0$, used on prefixes of even and odd length, respectively. On input $\winp$, these functions determine an infinite debate string $\wdeb=\gamma_1\gamma_2\cdots$: for odd $j$, $\gamma_j=P(\winp,\gamma_1\cdots\gamma_{j-1})$, and for even $j$, $\gamma_j=R(\winp,\gamma_1\cdots\gamma_{j-1})$. Thus the players alternate, beginning with the prover, and each sees the entire debate prefix before supplying its next symbol. The debate tape initially contains the resulting string $\wdeb$.

For an input $\winp$ and players $P,R$, let
\[
 \operatorname{acc}_{V,P,R}(\winp),\quad
 \operatorname{rej}_{V,P,R}(\winp),\quad\text{and}\quad
 \operatorname{loop}_{V,P,R}(\winp)
\]
be the corresponding acceptance, rejection, and nontermination probabilities, taken over the verifier's private coins. Their sum is $1$. We say that $V$ recognizes $L$ with perfect completeness and strong error at most $\varepsilon$ if
\begin{align*}
&\forall \winp\in L\;\exists P\;\forall R:
  \operatorname{acc}_{V,P,R}(\winp)=1,\\
&\forall \winp\notin L\;\forall P\;\exists R:
  \operatorname{rej}_{V,P,R}(\winp)\geq 1-\varepsilon.
\end{align*}
The second condition bounds the sum of false acceptance and nontermination probabilities by $\varepsilon$. Under the weak\-/error criterion, it is replaced by the requirement that $\operatorname{acc}_{V,P,R}(\winp)\leq\varepsilon$, with the same quantifiers; nontermination on a nonmember is then not counted as error.

We denote by
\[
 \pcstronglow(\conspace,\unboundedtime,\concoin)
\]
the class of languages $L$ for which, for every $\varepsilon>0$, there is a CDS verifier recognizing $L$ with perfect completeness and strong error at most $\varepsilon$, using at most a constant amount of space and tossing at most a constant number of coins. These space and coin\-/toss upper bounds may depend on $L$ and $\varepsilon$, but hold for every input, every pair of player strategies, and every sequence of coin outcomes. Constant space is realized by finite\-/state memory. The prefix $\mathsf{pc}$ denotes perfect completeness, and $\mathsf{strong\text{-}low}$ denotes arbitrarily small strong error. The qualifier $\unboundedtime$ imposes no time bound; nontermination is permitted but remains subject to the strong\-/error condition above.

\subsection{Picking One of Constantly Many Integers}

The verifier will use the following elementary sampler. Its explicit probability bounds are useful because the number of choices need not be a power of two.

\begin{lemma}\label{lem:pick-int}
For every $K\in\Zp$, a coin\-/tossing finite automaton can pick $i\in[1..K]$ in constant time depending on $K$, using $r=\lceil\log_2 K\rceil$ random bits, so that every $i$ is selected with probability $p_i$ satisfying
\[
 \frac{1}{2K}<p_i<\frac{2}{K}.
\]
\end{lemma}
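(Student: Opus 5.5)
The construction will toss $r=\lceil\log_2 K\rceil$ fair coins, read the outcome as an integer $x\in[0..2^r-1]$, and map it to an element of $[1..K]$ by a fixed surjection that is as balanced as possible. Concretely, set $i=(x\bmod K)+1$. Since $r$, $K$ and the lookup table from $r$\-/bit strings to $[1..K]$ are constants depending only on $K$, a finite automaton can hold the partial outcome in its state, toss the $r$ coins in $r$ steps, and enter a state encoding the chosen $i$. Space, time and the number of coins are therefore constants depending on $K$. The case $K=1$ is trivial: no coins are tossed and $p_1=1$, which lies strictly between $1/2$ and $2$.

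For the probability bounds, let $N=2^r$, so that $K\le N<2K$ by the choice of $r$. Under $x\mapsto (x\bmod K)+1$, each $i$ has either $\lfloor N/K\rfloor$ or $\lceil N/K\rceil$ preimages. Hence
\[
 \frac{\lfloor N/K\rfloor}{N}\le p_i\le\frac{\lceil N/K\rceil}{N}.
\]
Since $K\le N<2K$, we have $\lfloor N/K\rfloor=1$ and $\lceil N/K\rceil\le 2$. This gives $p_i\ge 1/N>1/(2K)$, using $N<2K$.

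The upper bound needs more care, and it is the step where I expect to spend some effort. If $N=K$, then $p_i=1/K<2/K$. If $K<N<2K$, then each $i$ has at most two preimages, so $p_i\le 2/N<2/K$, using $N>K$. Both inequalities in the statement are therefore strict, as required.

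The main point to verify is that the bounds are strict in the edge cases $N=K$ and $K=1$; the argument above handles both.
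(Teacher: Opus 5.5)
Your proposal is correct and follows essentially the same route as the paper: toss $r$ coins, map $v\mapsto (v \bmod K)+1$, and use $K\le 2^r<2K$ to conclude that each index has one or two preimages, with the power-of-two case (including $K=1$) giving exactly $1/K$. Your floor/ceiling bookkeeping is a slightly more formal version of the paper's chain $\frac{1}{2K}<\frac{1}{2^r}<\frac{1}{K}<\frac{2}{2^r}<\frac{2}{K}$.
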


\begin{proof}
Toss $r$ coins to obtain a uniformly random integer $v\in[0..2^r-1]$ and select $i=(v\bmod K)+1$. If $K$ is a power of two, each index has one preimage and is selected with probability $1/K$. Otherwise, each index has one or two preimages, giving probabilities $1/2^r$ or $2/2^r$, respectively, which satisfy
\[
 \frac{1}{2K}<\frac{1}{2^r}<\frac{1}{K}<\frac{2}{2^r}<\frac{2}{K}.
\]
For fixed $K$, the sampling and the map from $v$ to $i$ can be implemented in finite control in constant time. For $K=1$, select the sole index without tossing any coins.
\qed
\end{proof}

\section{Polynomial-Time Alternating Multihead Automata}\label{sec:pruning}

\Cref{fact:king-height} bounds the height of a winning strategy on members, but an arbitrary $\afa(k)$ may still have infinite computation paths. We first implement a polynomial timer with input heads and then use it to truncate branches that outlast it.

\begin{lemma}[Polynomial timer]\label{lem:timer}
For all $c_1,c_2,k\in\Zp$, there is a $\dfa(k)$ that runs for at least $c_1\cdot(n+c_2)^k$ steps and halts in $O(n^k)$ time on inputs of length $n$.
\end{lemma}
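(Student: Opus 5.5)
The plan is to build the timer as a nested counter, using the $k$ heads as the digits of a base-$(n+2)$ odometer, with the constants $c_1,c_2$ handled by a finite-state multiplier. Position $p_i\in[0..n+1]$ of head $H_i$ is one digit. The machine repeatedly increments the $k$-digit number $(p_k,\dots,p_1)$. To increment, it moves $H_1$ right if $H_1$ does not scan $\triangleleft$. If $H_1$ does scan $\triangleleft$, it sweeps $H_1$ back to $\triangleright$ and carries into $H_2$ in the same way, and so on. It halts when every head scans $\triangleleft$, that is, when no further carry is possible. Endmarker detection comes directly from the scanned symbols, and the finite control only remembers which digit is being carried into. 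So this is a legitimate $\dfa(k)$ with exactly one action at each nonhalting configuration and no head ever moving past an endmarker.

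The first step is the lower bound. The odometer visits all $(n+2)^k$ tuples in $[0..n+1]^k$, and each visit costs at least one step, so it runs at least $(n+2)^k-1$ steps. To obtain $c_1\cdot(n+c_2)^k$, I will run the odometer $m$ times in sequence, keeping a counter up to $m$ in the finite control. Between runs, all heads are swept back to $\triangleright$. Choose $m$ so that $m\bigl((n+2)^k-1\bigr)\ge c_1(n+c_2)^k$ for every $n\ge 0$. This is possible because the ratio
\[
 \frac{(n+c_2)^k}{(n+2)^k-1}
\]
is bounded above by a constant depending only on $c_2$ and $k$, since $(n+2)^k-1\ge 1$. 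Taking $m=c_1\cdot\max(1,c_2)^k\cdot 2$ should suffice, because $(n+c_2)^k\le c_2^k(n+1)^k$ for $c_2\ge1$ and $(n+2)^k-1\ge (n+1)^k$.

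The second step is the upper bound. In each odometer run, the cost of resetting digit $i$ is $O(n)$, and digit $i$ overflows $O((n+2)^{k-i})$ times. The total cost is therefore $\sum_i O(n)\cdot O((n+2)^{k-i})$ plus the $(n+2)^k$ increments, which is $O(n^k)$ because the $i=1$ term dominates. One could also charge each step to a distinct head move and note that each head does at most $2(n+2)$ moves per cycle of the head above it. Multiplying by the constant $m$ and adding the $O(kn)$ cost of the sweeps between runs keeps the total at $O(n^k)$.

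The main obstacle is the carry bookkeeping. I need the halting condition to be clean, with no wraparound of $H_k$, and the small case $n=0$ must be handled correctly. Making the multiplier $m$ absorb both $c_1$ and $c_2$ uniformly in $n$ also takes some care, and I would check it directly with the inequality above.
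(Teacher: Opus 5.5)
Your proof is correct, and its skeleton matches the paper's. The $k$ head positions serve as digits of a counter, and the running time is bounded by charging the $O(n)$ reset of digit $i$ to its $O(n^{k-i})$ overflows.

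Where you differ is in how you meet the constants $c_1,c_2$. The paper builds them into the counter. Each digit is a head position plus a spill counter $s_i\in[0..c_2]$, so the base becomes $b=n+1+c_2$, and each increment is preceded by a $c_1$-step pause. The bound $c_1b^k\ge c_1(n+c_2)^k$ can then be read directly off the construction. You instead keep the plain base-$(n+2)$ odometer and repeat it a constant $m$ times, relying on $c_1(n+c_2)^k\le c_1c_2^k(n+1)^k\le c_1c_2^k\bigl((n+2)^k-1\bigr)$. You verify this correctly for all $n\ge 0$, so $m=c_1c_2^k$ already suffices and the extra factor $2$ is harmless.

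Your route avoids spill counters and pause states entirely. The price is a leading running-time constant of order $c_1c_2^k$ rather than order $c_1$, which is immaterial to the $O(n^k)$ claim for fixed $c_1,c_2,k$.

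The halting issue you flagged is easily settled. A $\dfa(k)$'s transition sees all $k$ scanned symbols at once, so the machine can test whether every head scans $\triangleleft$ at the start of each increment, before any sweep. Hence $H_k$ never wraps around.
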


\begin{proof}
Use a $k$\-/digit counter. For each $i\in[1..k]$, represent digit $i$ by $H_i$'s position and a spill counter $s_i\in[0..c_2]$, with $s_i=0$ initially. To increment digit $i$, move $H_i$ one cell right if it is not on $\triangleleft$; otherwise increment $s_i$. When $s_i$ reaches $c_2$, the timer terminates if $i=k$; otherwise return $H_i$ to $\triangleright$, reset $s_i$ to zero, and carry by incrementing digit $i+1$. Repeatedly pause for $c_1$ steps and increment the counter by incrementing digit $1$ and propagating carries. All spill counters and control information fit in finite control.

Each digit requires $b=n+1+c_2$ increments before a carry or termination. There are therefore $b^k$ counter increments, each preceded by a pause, giving at least $c_1 b^k\geq c_1\cdot(n+c_2)^k$ steps.

Excluding head resets, each counter increment and its preceding pause take constant time, for $O(b^k)=O(n^k)$ time in total. For $i<k$, head $H_i$ is reset $b^{k-i}\in O(n^{k-i})$ times, at $O(n+1)$ steps each, totaling $O(n^{k-i+1})$ time. The total running time is therefore
\[
 O(n^k)+\sum_{i=1}^{k-1}O(n^{k-i+1})=O(n^k).\tag*{\(\qed\)}
\]
\end{proof}

\begin{lemma}[Polynomial-time pruning]\label{lem:pruning}
For every $k\in\Zp$,
\[
 \AFA(k)\subseteq\AFA(2k,O(n^k)\text{-time}).
\]
\end{lemma}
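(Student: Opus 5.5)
Given $M$, an $\afa(k)$ recognizing $L$ with state set $Q$, we build an $\afa(2k)$ $M'$ that runs $M$ on heads $H_1,\ldots,H_k$ and uses heads $H_{k+1},\ldots,H_{2k}$ as the polynomial timer of \cref{lem:timer}. We take $c_1=|Q|$ and $c_2=2$, so the timer runs for at least $|Q|\cdot(n+2)^k$ steps and halts in $O(n^k)$ time. The idea is to interleave the two computations: each step of $M'$ simultaneously performs one step of $M$ (an action chosen existentially or universally according to $M$'s current state, exactly as in $\delta$) and one step of the deterministic timer. The states of $M'$ are pairs (state of $M$, state of the timer $\dfa(k)$). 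Each pair is existential or universal according to its $M$-component, and the timer component evolves deterministically. If $M$ reaches $\qacc$ or $\qrej$, then $M'$ halts in the same way. If the timer halts while $M$ has not yet halted, $M'$ enters $\qrej$. The endmarker and nonempty-transition requirements are inherited from $M$ and from the timer.

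Every computation path of $M'$ therefore lasts at most as long as the timer, which is $O(n^k)$ steps. This gives the time bound.

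For correctness, we show that $M'$ accepts exactly the inputs accepted by $M$.

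\textbf{Members.} Let $\winp\in L$. By \cref{fact:king-height}, $M$ has a winning existential strategy of height at most $|Q|\cdot(n+2)^k$. Lift it to $M'$ by making the same existential choices. Every path of this strategy reaches $\qacc$ within $|Q|(n+2)^k$ steps of $M$, which is no more than the timer's minimum running time. Hence no path is truncated, and all leaves of the lifted strategy are accepting. We must check the off-by-one here: the timer must still be running at the step in which $M$ enters $\qacc$. If necessary, we use $c_1=|Q|+1$ to get slack.

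\textbf{Nonmembers.} Every finite winning existential strategy of $M'$ projects to a winning existential strategy of $M$. Leaves of $M'$ are accepting only when $M$ itself has entered $\qacc$, because truncated paths reject. The projection is obtained by mapping each node to its $M$-configuration and discarding the timer component. The timer is deterministic, so the execution tree of $M'$ corresponds node-for-node to a prefix-truncation of $M$'s execution tree. The projected tree therefore retains one child at each existential node and all children at each universal node. Consequently, if $M'$ accepts, so does $M$, and $M'$ recognizes $L$. In fact $M'$ rejects every nonmember, since all its paths are finite.

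The main obstacle is the precise bookkeeping in the interleaving, specifically the correspondence between the truncated execution tree and strategies of $M$. The off-by-one in the height versus the timer length is easily absorbed by enlarging $c_1$.
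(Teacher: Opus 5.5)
Your proposal is correct and is essentially the paper's proof. Both use the same $2k$-head product of $M$ with the timer of \Cref{lem:timer} (with $c_1=|Q|$, $c_2=2$), inherit the branching type from $M$'s component, apply \Cref{fact:king-height} for members, and observe that truncation only adds rejecting leaves on nonmembers. The paper settles your off-by-one concern by accepting whenever $M$ accepts, even on the transition where the timer halts; your tie-breaking rule already does this, so the extra slack $c_1=|Q|+1$ is harmless but unnecessary.
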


\begin{proof}
We apply the pruning method used in~\cite[Lemma~3]{GS22} to alternating multihead automata. Let $L\in\AFA(k)$ and let $M$ be a $\afa(k)$ recognizing $L$, with state set $Q$. Apply \Cref{lem:timer} with $c_1=|Q|$, $c_2=2$, and the same $k$ to obtain a deterministic $k$\-/head timer $N$. Construct a $2k$\-/head automaton $A$ running \Cref{proc:pruned-tree}. It uses its first $k$ heads to simulate $M$ and its last $k$ heads to simulate $N$. Its finite control stores the pair of simulated states; each nonhalting product state is existential or universal according to its $M$ component.

\begin{algorithm}[htbp]
\caption{Polynomial-time pruning}\label{proc:pruned-tree}
\raggedright
\begin{algorithmic}[1]
\State Initialize $M$ and $N$.
\State Until either simulation halts, simultaneously execute an available action of $M$ on the first $k$ heads and the unique action of $N$ on the last $k$ heads, using the branching type of $M$.
\State Accept if $M$ accepted, including when $N$ halts on the same transition; reject otherwise.
\end{algorithmic}
\end{algorithm}

By \Cref{lem:timer}, $N$ runs for at least
\[
 f(n)=|Q|\cdot(n+2)^k
\]
steps and for at most $O(n^k)$ steps.

For $\winp\in L$, \Cref{fact:king-height} supplies a winning existential strategy for $M$ of height at most $f(n)$. Following the same existential choices in $A$ makes the simulation of $M$ accept no later than the timer expires, regardless of the universal choices. Thus $A$ accepts $\winp$.

For $\winp\notin L$, $M$ does not accept. Since $A$ preserves $M$'s branching and only truncates paths by rejecting, $A$ does not accept either. The timer ensures that every path halts, so $A$ rejects $\winp$.

On every input of length $n$, each simulation transition of $A$ advances $N$ by one step, and the simulation stops no later than $N$ halts. Initialization and the final decision take constant time, so every computation path of $A$ halts within $O(n^k)$ steps. Hence $A$ recognizes $L$ with $2k$ heads in the required time.
\qed
\end{proof}

\section{Constant-Coin Debates for \texorpdfstring{$\Pclass$}{P}}\label{sec:main}

We use the verifier construction of Demirci, Say, and Yakary{\i}lmaz~\cite[Lemma~6]{DSY15}, with a modification that allows repetition to reduce strong error. To verify membership in a language $L\in\Pclass$, their verifier simulates an alternating multihead automaton recognizing $L$. The prover reports the head readings and chooses existential actions, while the refuter chooses universal actions. The verifier privately selects one head to track, checks its reported readings, and repeats the simulation in a fixed number of rounds.

Our modification allows the refuter to concede a round at its discretion. A concession following a false head\-/reading report ends the round if the verifier misses the lie; otherwise, the verifier rejects. This prevents undetected false reports from blocking further rounds and enables strong\-/error reduction by repetition.

\begin{theorem}\label{thm:main}
\[
 \Pclass\subseteq\pcstronglow(\conspace,\unboundedtime,\concoin).
\]
\end{theorem}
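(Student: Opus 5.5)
Fix $L\in\Pclass$ and $\varepsilon>0$. By \Cref{fact:king-p} and \Cref{lem:pruning}, $L$ is recognized by a $\afa(k)$ $M$ in which every computation path halts within $O(n^{t})$ steps. We may assume the transition function is normalized so that the actions available at each configuration can be indexed by a fixed finite set. The verifier $V$ runs $m$ rounds, with $m$ to be chosen depending on $\varepsilon$ and $k$. In each round it privately picks a head index $h\in[1..k]$ using \Cref{lem:pick-int}, which costs $\lceil\log_2 k\rceil$ coins per round and hence constantly many coins in total. It then simulates $M$ from the initial configuration step by step.

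Each simulated step is handled as follows. The prover announces the claimed tuple of scanned symbols. If the state is existential, the prover also names an action; if it is universal, the refuter names one. The refuter may instead emit a special \emph{concede} symbol right after a reading report. $V$ moves its real input head in lockstep with the simulated head $H_h$ and compares the reported $h$-th symbol with the true one, rejecting immediately on a mismatch. The consequences of a concession are fixed as follows:
\begin{itemize}
\item If the refuter concedes and $V$ has detected no lie so far, the round simply ends without a verdict, and $V$ moves on to the next round.
\item If the simulation reaches $\qrej$, $V$ rejects.
\item If it reaches $\qacc$, the round passes.
\end{itemize}
$V$ accepts iff all $m$ rounds pass. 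In addition, concession is allowed only immediately after a report, and only when that report is checked against the tracked head. Since $V$ cannot know in advance whether a report is false, the rule must be stated as: a concession is punished (reject) if the report just made is \emph{true} on the tracked head. Thus an honest refuter concedes exactly at a misreport, and $V$ rejects whenever the concession occurs on a correct reading of $h$.

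\textbf{Completeness.} The honest prover reports truthfully and plays the winning strategy given by \Cref{lem:pruning} and \Cref{fact:king-height}. Under this prover no report is ever false, so any refuter concession is punished only in the sense that the round then ends. To obtain perfect completeness, I would instead let a concession on a true report mean that the round \emph{passes}, and a concession on a false report on $h$ mean rejection. With this rule, every round against the honest prover either reaches $\qacc$ within $O(n^t)$ steps or is conceded, so it passes. Hence $V$ accepts with probability $1$ and halts in polynomial time.

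\textbf{Soundness.} Let $\winp\notin L$ and fix any prover. The refuter follows the winning universal strategy from \Cref{fact:universal-strategy}, restricted to the pruned machine so that every path has finite length. It concedes at the first false report, and there only if that report is false. (Whether the report is false on the tracked head is decided by $V$, which knows $h$; the refuter knows only that some coordinate is false.) As long as every report is true, the simulation follows a real path of $M$ consistent with a rejecting universal strategy, so it reaches $\qrej$ in polynomial time and $V$ rejects. At the first false report, the wrong coordinate $j$ equals $h$ with probability greater than $1/(2k)$ by \Cref{lem:pick-int}; in that event the concession causes rejection, and otherwise the round ends. Every round therefore terminates. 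Moreover, the refuter's strategy depends only on the public debate, which forces each round to end in one of three ways: rejection, a concession that catches the lie with probability at least $1/(2k)$, or an uncaught concession.

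The main obstacle is that the prover can adapt across rounds, while the coins are private and independent per round. Each round that does not reject outright is rejected with conditional probability at least $1/(2k)$ given the past, because $h$ is fresh and hidden. Hence the probability that all $m$ rounds pass is at most $(1-1/(2k))^m\le\varepsilon$ for $m=\lceil 2k\ln(1/\varepsilon)\rceil$. Since every round halts, $\operatorname{loop}=0$, and so $\operatorname{rej}\ge1-\varepsilon$. I would check carefully that the prover's knowledge of earlier debate cannot correlate with the current $h$: $V$ reveals $h$ only through rejecting, and rejection stops the debate. I would also check that $V$'s lockstep tracking is valid, because movements are determined by the publicly announced actions.
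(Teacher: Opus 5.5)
Your final protocol and analysis are essentially the paper's: a pruned polynomial-time alternating multihead automaton is simulated while one head is privately tracked, and the refuter concedes exactly at the first false report. A mismatch at the tracked head rejects, any other concession passes the round, and the fresh per-round catch probability from \Cref{lem:pick-int} (valid because the simulated configurations and round boundaries are fixed by the debate independently of $h$), together with zero nontermination, gives the $(1-1/(2k))^m\le\varepsilon$ bound. Your first statement of the concession rule (rejecting when the refuter concedes on a correct tracked reading) would break completeness, but the corrected rule you adopt in the completeness paragraph is exactly the paper's. Your index normalization of $\delta$ can replace the paper's reject/pass treatment of invalid actions, provided every index is valid at every $(q,\vec\sigma)$, e.g.\ by duplicating actions.
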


\begin{proof}
Let $L\in\Pclass$ and $\varepsilon\in(0,1)$ be arbitrary. By \Cref{fact:king-p} and \Cref{lem:pruning}, choose a polynomial\-/time alternating multihead automaton
\[
 M:=M_L=(Q_{\exists},Q_{\forall},\Sigma,\delta,\qinit,\qacc,\qrej)
\]
recognizing $L$, independently of $\varepsilon$. Let $k:=k_M$ be half the number of heads of $M$. By the pruning construction, $k$ is a positive integer and $M$ halts within $O(n^k)$ steps on every computation path. We construct a verifier $V:=V_{M,\varepsilon}$ that checks repeated simulations of $M$.

\paragraph{Debate Format and Verifier.}
Each simulated transition, whether existential or universal, is described by a prover--refuter symbol pair on the debate tape. The prover symbol contains all $2k$ head readings and a proposed action; the refuter symbol contains a proposed action or a concession. Unless the refuter concedes, the verifier uses the prover's action at an existential state and the refuter's action at a universal state, ignoring the other action. If the refuter concedes and the verifier detects no false head reading, the round ends without executing either action.

Let
\[
 \mathcal{A}=\bigl(Q_{\exists}\cup Q_{\forall}\cup\{\qacc,\qrej\}\bigr)\times D^{2k}
\]
be the set of possible action tuples. The message alphabets are
\[
 \Gamma_1=\Sigma_{\bowtie}^{2k}\times\mathcal{A}
 \qquad\text{and}\qquad
 \Gamma_0=\mathcal{A},
\]
for the prover and refuter, respectively. The refuter can concede a round using
\[
 a_{\mathrm{pass}}=(\qacc,0,\ldots,0)
\]
regardless of the simulated state or whether this action is available.

Set
\[
 \alpha:=\alpha_k=1-\frac{1}{4k},
 \qquad
 m:=m_{k,\varepsilon}=\left\lceil\log_{\alpha}\varepsilon\right\rceil,
 \qquad
 r:=r_k=\left\lceil\log_2(2k)\right\rceil.
\]
\Cref{proc:spot-check} describes the verifier's computation, which consists of at most $m$ rounds. Each round starts a new simulation of $M$ from its initial configuration and uses $r$ coin tosses to select a head to track. The command \pass{} ends the current round immediately, skipping its remaining instructions; the verifier starts the next round, or accepts if all $m$ rounds are complete. It is invoked when the simulated machine reaches $\qacc$ or the refuter concedes; an invalid refuter action at a universal state is treated as a concession. The commands \accept{} and \reject{} halt the verifier with the corresponding decision on the input.

\begin{algorithm}[htbp]
\caption{Constant-coin debate verification}\label{proc:spot-check}
\raggedright
\begin{algorithmic}[1]
\For{$m$ rounds}
  \State Return the verifier's input head to $\triangleright$ without consuming debate symbols, and set the simulated state to $q\leftarrow\qinit$.
  \State Use the sampler from \Cref{lem:pick-int} to select $i\in[1..2k]$ privately at random.
  \While{$q\notin\{\qacc,\qrej\}$}
    \State Consume the next prover symbol $(\sigma_1,\ldots,\sigma_{2k},A_1)\in\Gamma_1$ and refuter symbol $A_0\in\Gamma_0$ from the debate tape, and read the input symbol $\sigma$ under the verifier's head.
    \State \Reject{} if $\sigma\neq\sigma_i$. \Pass{} if $A_0=a_{\mathrm{pass}}$.
    \State If $q\in Q_{\exists}$: \Reject{} if $A_1\notin\delta(q,\sigma_1,\ldots,\sigma_{2k})$; otherwise, set $(q',d_1,\ldots,d_{2k})\leftarrow A_1$.
    \State If $q\in Q_{\forall}$: \Pass{} if $A_0\notin\delta(q,\sigma_1,\ldots,\sigma_{2k})$; otherwise, set $(q',d_1,\ldots,d_{2k})\leftarrow A_0$.
    \State Set $q\leftarrow q'$ and move the verifier's input head by $d_i$.
  \EndWhile
  \State If $q=\qacc$, \pass{}; otherwise, \reject{}.
\EndFor
\State \Accept{}.
\end{algorithmic}
\end{algorithm}

The verifier stores the simulated state and physically tracks only head $i$. Since it never executes an invalid action, the simulation follows a genuine computation of $M$ until an undetected false head report occurs.

As long as the verifier continues, its simulated state and round boundaries are determined by the debate prefix, independently of $i$. The players can therefore follow the simulation without knowing the selected head. Before the first false report in each round, the input and debate prefix also determine the genuine configuration of $M$ and its path in the execution tree.

\paragraph{Perfect Completeness.}
For $\winp\in L$, choose a winning existential strategy $T_{\exists}$ of $M$ on $\winp$. Define a prover $P$ that starts at the root of $T_{\exists}$ in each round. At a node with configuration $c$, it supplies $(\sigma_1,\ldots,\sigma_{2k},A_1)$, where the $\sigma_i$ are the actual head readings in $c$. At an existential node, $A_1$ is the action leading to the unique child retained in $T_{\exists}$; at a universal node, $P$ supplies the ignored tuple $(\qacc,0,\ldots,0)$. After each simulated transition, $P$ moves to the child specified by the action used in that transition.

Every valid refuter action at a universal node leads to a child in $T_{\exists}$. A concession or invalid universal action passes the round, and $P$ returns to the root if another round begins. Thus the rule defines $P$ throughout all rounds against any refuter. The truthful reports and valid existential actions rule out rejection during the simulation. Any round not passed early reaches an accepting leaf because $T_{\exists}$ is finite. Thus all $m$ rounds pass, and $V$ accepts with probability $1$.

\paragraph{Strong Soundness.}
For $\winp\notin L$, $M$ rejects because it recognizes $L$ and halts on every path. Choose a winning universal strategy $T_{\forall}$ on $\winp$ by \Cref{fact:universal-strategy}. Define a refuter $R$ that starts at its root in each round and updates its current node after each simulated transition. On reading a prover symbol, $R$ compares the reported readings with the actual readings in that node's configuration. If any reading is false, it sends $a_{\mathrm{pass}}$ to concede the round, making $V$ pass if it fails to detect the false report. Otherwise, at a universal node, it sends the action leading to the unique child retained in $T_{\forall}$; at an existential node, it sends the ignored tuple $(\qrej,0,\ldots,0)$. Note that this construction of $R$ ensures that $V$'s simulation of $M$ in each round stays within $T_{\forall}$, regardless of the prover it faces.

Since $R$'s universal choices are valid and $T_{\forall}$ has only rejecting leaves, neither an invalid universal action nor reaching $\qacc$ can occur, leaving concession as the only way for $V$ to pass a round. Thus, for $V$ to accept, $R$ must concede in all $m$ rounds. Each concession follows a false report from the prover. Hence, for $V$ to accept, the prover must make a false report in all $m$ rounds.

Fix any such prover $P^*$. This fixes its debate with $R$, with rounds delimited by $R$'s concessions. Consider any of these rounds, and let $j$ be the smallest index of a head misreported in its false report. All earlier head readings in the round are truthful (an earlier false report would already have prompted $R$ to end the round by conceding), so $V$ detects no mismatch whichever head it tracks. It reads the same actions and simulates the same configurations for every choice of $i$. Hence the false report and the heads misreported in it, including $j$, do not depend on $i$. Upon reaching the round, $V$ samples $i$ using fresh private coins, independently of $j$ and of the coins used in earlier rounds. Conditional on reaching that round, \Cref{lem:pick-int} therefore gives
\[
 \Pr(i=j)>\frac{1}{4k}.
\]
If $i=j$, $V$ detects the lie and rejects. If no misreported head is selected, $V$ processes $R$'s concession and passes the round. The conditional probability that a lie escapes detection is therefore at most $1-\frac{1}{4k}=\alpha$. Applying this bound successively, the probability of passing all $m$ rounds is at most
\[
 \alpha^m
 =\alpha^{\lceil\log_{\alpha}\varepsilon\rceil}
 \leq\alpha^{\log_{\alpha}\varepsilon}
 =\varepsilon,
\]
where the inequality uses $0<\alpha<1$.

Every round terminates: without a false report, $V$ rejects an invalid existential action or reaches a rejecting leaf; a false report ends the round by rejection or concession. Thus $V$ has zero nontermination probability against $R$, and
\[
 \operatorname{rej}_{V,P^*,R}(\winp)
 \geq 1-\alpha^m
 \geq 1-\varepsilon.
\]
This proves strong soundness.

\paragraph{Resources.}
Recall that resource bounds must hold for every input, player pair, and coin outcome, including nonhalting computations possible with players other than those used above.

At most $m$ rounds begin, each using $r$ coin tosses, for a total of at most $mr$. The verifier can be implemented entirely within finite control, without a work tape: the round counter, selected head's label $i\in[1..2k]$, simulated state, and any temporarily retained readings or action tuples all range over finite sets depending only on $L$ and $\varepsilon$. Its physical input head tracks the selected head's input position. These constant space and coin bounds, together with perfect completeness and strong soundness, prove the theorem.
\qed
\end{proof}

\begin{remark}
In~\cite[Lemma~6]{DSY15}, an undetected false report can prevent further rounds by making the simulation run forever. Concessions eliminate this possibility against the refuter constructed above, enabling repetition to reduce strong error.
\end{remark}

\begin{remark}[Running time]
On $\winp\in L$, the constructed prover guarantees termination against every refuter; on $\winp\notin L$, the constructed refuter guarantees termination against every prover. In either case, each round follows a genuine computation of $M$ until it ends or encounters a false report, which ends the round. Processing each simulated transition takes constant time, so a round uses $O(n^k)$ steps, plus $O(n+1)$ to reset the input head. Since $m$ is constant, the total time is $O(n^k)$ when the constructed prover is used on members or the constructed refuter is used on nonmembers.

For arbitrary player pairs, an undetected false report without a concession can corrupt the simulated timer. Thus $V$ has no finite running\-/time bound over all player pairs and may run forever.
\end{remark}

\section{Concluding Remarks}\label{sec:conclusion}

We have shown that every language in $\Pclass$ has complete\-/information debates checkable by a finite\-/state verifier with perfect completeness, arbitrarily small strong error, and only constantly many private coin tosses. The improvement over the weak\-/error analysis of Demirci, Say, and Yakary{\i}lmaz~\cite[Lemma~6]{DSY15} comes from allowing the refuter to forgo a compromised round, thereby preventing a dishonest prover from converting an undetected spot\-/check failure into nontermination against the prescribed refuter.

The same idea may be useful in other debate settings where one truthful participant can recognize that the opposing participant has supplied inconsistent information. A direct extension to the partial- and zero\-/information protocols of Demirci, Say, and Yakary{\i}lmaz~\cite[Lemma~10 and Theorem~19]{DSY15}, however, faces an information barrier. Universal choices hidden from the prover can change the simulated heads' positions. The refuter alone has the information needed to report or verify the corresponding readings, yet cannot be relied upon to expose its own false reports. This prevents a direct use of concessions triggered by detecting false readings, rather than ruling out concessions altogether. A direction for future research is to determine whether concessions based on other criteria can permit safe round termination under restricted visibility while retaining constant space and randomness.

\begin{credits}
\subsubsection{\ackname}

I am grateful to A. C. Cem Say, my master's and doctoral thesis advisor, for his guidance, encouragement, and insightful discussions throughout this work. This research began with his suggestion to investigate whether the error\-/reduction ideas in my master's thesis could be applied to the debate systems of Demirci, Say, and Yakary{\i}lmaz~\cite{DSY15}.

\subsubsection{\discintname}
The author has no competing interests to declare that are relevant to the content of this article.

\end{credits}

\bibliographystyle{splncs04}
\bibliography{paper/references}

\end{document}